\author{A. Hanyga 
and M. Seredy\'{n}ska\\
Institute of Fundamental Technological Research\\
ul. Bitwy Warszawskiej 1920 r. 14 m. 52, 02-366 Warszawa}
\title{Spatially fractional-order viscoelasticity, non-locality and a new kind of anisotropy}
\newtheorem{theorem}{Theorem}[section]
\newtheorem{lemma}[theorem]{Lemma}
\newtheorem{assumption}{Assumption}
\newenvironment{proof}{\textbf{Proof.}}{\hfill$\Box$\\ }
\newcommand{\re}{\mathrm{Re}}
\newcommand{\im}{\mathrm{Im}}
\newcommand{\Ai}{\mathrm{Ai}}
\newcommand{\kk}{\mathbf{k}}
\newcommand{\yy}{\mathbf{y}}
\newcommand{\trace}{\mathrm{trace\,}}
\newcommand{\dd}{\mathrm{d}}
\newcommand{\D}{\mathrm{D}}
\newcommand{\I}{\mathrm{I}}
\newcommand{\ii}{\mathrm{i}}
\newcommand{\e}{\mathrm{e}}
\newcommand{\Div}{\mathrm{div}\,}
\newcommand{\Grad}{\mathrm{grad}\,}
\newcommand{\KK}{\mathbf{K}}
\newcommand{\QQ}{\mathbf{Q}}
\newcommand{\B}{\mathbf{C}}
\newcommand{\ee}{\mathbf{e}}
\newcommand{\uu}{\mathbf{u}}
\begin{document}
\maketitle
\begin{abstract}
Spatial non-locality of space-fractional viscoelastic equations of motion is studied. Relaxation effects are accounted for by replacing second-order time derivatives by lower-order fractional derivatives and their 
generalizations. It is shown that space-fractional equations of motion of an order strictly less than 2 allow 
for a new kind anisotropy, associated with angular dependence of non-local interactions between stress and strain 
at different material points. Constitutive equations of such viscoelastic media are determined. Explicit 
fundamental solutions of the Cauchy problem are constructed for some cases isotropic and anisotropic non-locality.  
\end{abstract}

\noindent\textbf{Keywords.} viscoelasticity, non-local, fractional, anisotropic, bio-tissue, anomalous diffusion.
\noindent\textbf{MSC Class:} 74D05, 74A20, 74J10. 

\noindent\textbf{Notation.}\\
$(f \ast_t \, g)(t) := \int_0^t f(s)\, g(t-s) \, \dd s$\\
$(f \ast_x \, g)(x) := \int_{\mathbb{R}^3} f(x-y) \, g(y) \, \dd y$\\
Fourier transform: $\hat{f}(\kk) = \int_{\mathbb{R}^d} \, \e^{-\ii \kk\cdot x}\, f(x)\, \dd_d x$;\\
Laplace transform: $\tilde{f}(p) = \int_0^\infty \e^{-p t} \, f(t) \, \dd t $.

\section{Introduction}

Partial differential equations with fractional derivatives play an important role in modeling anomalous 
diffusion and wave propagation in media with complex structure. Equations invariant with respect to space 
and time scaling are of particular interest. While space-time fractional equations have been widely accepted 
and studied for modeling anomalous diffusion \cite{TimeSpaceFrac}, no analogous equations have been obtained 
for wave propagation in continuous media. 

Viscoelastic models considered here have two new aspects. In the first place the stress-strain constitutive
equation is non-local - the stress at a material point depends on the strain at neighboring material points - 
and as a result the  operator in the equation of motion acting on the spatial variables is non-local. In 
addition, the time derivative can be replaced by a non-local operator to allow for time-delayed response of 
stress to strain. The latter addition is however familiar in contemporary viscoelasticity. 

The other aspect of spatial non-locality is the appearance of a new kind of anisotropy. In linear viscoelasticity 
elastic anisotropy manifests itself in connection with the tensorial character of the constitutive equation 
linking stress to the strain tensor. Elastic anisotropy is associated with the symmetry properties of the 
stiffness tensor with respect to the linear transformations of the coordinates. Elastic anisotropy can result 
from crystallographic structure, fine layered structure, parallel cracks etc. It accounts for angular dependence 
of elastic and viscoelastic moduli, that the dependence of stress on the direction of stretching or shear. 

In the theory presented below there is room for a different kind of anisotropy associated with non-locality of 
the stress-strain constitutive equation. In the equations of motion the operator acting on the spatial variables 
is a pseudo-differential operator of an order $\leq 2$. If the order of the spatial operator is strictly lower 
than 2, then the spatial operator operator is non-local. This brings about a new kind of anisotropy. 
The new anisotropy is not related to the tensorial properties of the stiffness coefficients but is instead  
defined by the micro-local structure of the spatial operator. Anisotropy associated with non-locality accounts for 
the angular dependence of sensitivity of stress to the strain in the adjacent material. Such a dependence can be 
expected to be much stronger along a ligament than transversally to it. 

In the Fourier-transformed  spatial operator the anisotropy associated with non-locality is expressed in terms 
of anisotropic wave number dependence. We shall only consider a special class of such operators whose symbols 
are expressed in terms of a probability distribution of over all the directions on the unit sphere. A distribution 
over directions on a unit sphere can be expressed as an infinite series of Legendre polynomials. 
This is far more general than the concept of anisotropy considered in elasticity. 

In linear elasticity all the anisotropy classes are derived from the symmetries of the stiffness coefficients. 
The anisotropic properties of such media can be expressed in terms of tensors of finite rank. In local linear 
viscoelasticity stiffness coefficients are frequency dependent. Such understanding of anisotropy has been 
motivated by crystallographic, layered or densely cracked structures of many materials. In modelling of 
high-resolution MRI scans of anisotropic bio-tissues requires diffusion tensors of very high orders 
\cite{OzarslanMareci03}. This is due to a high complexity of cell and tissue structure. Anisotropy of 
viscoelastic wave motion in a bio-tissue can be expected to reveal the same level of complexity. Anisotropic 
diffusion is possible only for anomalous diffusion of the space-fractional type. Anomalous diffusion has a 
counterpart in non-local viscoelasticity of the kind considered in this paper.

The anisotropy of non-locality can coexist with the anisotropy associated with the symmetries of the stiffness 
coefficients because the stiffness coefficients depend on the wave number directions. This kind of anisotropy 
is however also possible in scalar models of viscoelasticity. It is interesting that for this kind of anisotropy 
explicit solutions of quite general anisotropic equations can be constructed, which is not the case for anisotropy 
associated with the symmetries of the stiffness coefficients.

In the case of anisotropy associated with the symmetries of the stiffness coefficients the simplest geometrical 
element is a plane of symmetry (in transversal symmetry). Planes of symmetry can be associated with micro-layered 
structures. Anisotropy associated with non-locality allows for different kinds of microstructure, such as 
curvilinear ligaments and channels in bio-tissues. Non-locality is also very likely in bio-tissues due to their 
complex structure and composition. 

In this paper we focus on spatial pdo's whose symbols are homogeneous functions of the wavenumber. In the 
absence of anisotropy such operators reduce to fractional-order Riesz derivatives \cite{Samko} or, in other words, 
to fractional-order Laplacians.

The usual Laplace-Fourier representations are inconvenient for the construction and analysis of the solutions 
of equations which are of fractional order with respect to both spatial variables and time variable. This leads 
to some difficulties in the analysis of the attenuation and dispersion. As discovered in 
\cite{MainardiLuchkoPagnini}, the solutions  of such equations can be expressed in terms of Mellin convolutions. 
The solution is a superposition of copies of a fixed waveform subject to a varying scaling. The above-mentioned  
representations of the solutions are fairly explicit and involve only one non-elementary function - the Wright 
function. The Wright function corrects for the difference between the orders of the time derivative and the 
spatial operator. The effect of the non-local spatial operator is represented by an elementary function provided 
the the order of the time derivative matches the order of the Laplacian.

\section{Space-time fractional wave equations}

We shall derive an explicit formula for the Green's function of an anisotropic space-time fractional 
linear viscoelastic equation, defined as the solution of the Cauchy problem
\begin{equation} \label{eq:motion} 
\rho \, \D^\beta u = Q \, u, \qquad u(0,x) = \delta(x), \quad \D u(0,x) = 0 \quad x \in \mathbb{R}^3
\end{equation}
where $\D^\beta$ denotes the Caputo derivative, $1 < \beta \leq 2$ and $Q$ is a pseudo-differential operator 
defined by its symbol 
\begin{equation} \label{eq:N}
\hat{Q}(\kk) = -\int_{\mathcal{S}} \vert \kk\cdot\yy \vert^\alpha \, \mu(\dd \yy)
\end{equation}
and $\beta \leq \alpha \leq 2$. The integral in eq.~\eqref{eq:N} extends over a unit sphere 
$\mathcal{S}$, defined by the equation $\vert \yy \vert = 1$ and the measure $\mu$ on 
$\mathcal{S}$ is non-negative and has finite mass.

Note that $\hat{Q}(\kk) = \vert \kk \vert^\alpha \, \hat{Q}(\hat{\kk})$. Hence the order $\alpha$ of the
operator is independent of the anisotropic properties of $Q$, represented by the measure $\mu$. 

Choosing $\mu$ to be a homogeneous distribution over the sphere
$$\mu(\dd \yy) = M\, \frac{\alpha+1}{4 \uppi}\sin(\theta) \, \dd \theta \,\dd \varphi$$
where $\theta$, $\varphi$ are polar coordinates on $\mathcal{S}$ we get 
\begin{equation}
\hat{Q}(\kk) = -M\, \vert \kk \vert^\alpha
\end{equation}
and thus $Q$ essentially reduces to a fractional-order Laplacian
\begin{equation}
Q = -M\, \left(-\nabla^2\right)^{\alpha/2}
\end{equation}

For $\alpha = 2$ we have $\vert \kk\cdot\yy\vert^2 = \kk\cdot \yy \yy^\mathsf{T}\, \yy$ and
thus $\hat{Q}(\kk) = -\kk \cdot \mathbf{M}\, \kk$, where $\mathbf{M}$ is a positive definite matrix 
$$\mathbf{M} := \int_\mathbf{S} \yy \yy^\mathsf{T} \, \mu(\dd \yy)$$
Consequently $Q = \nabla \cdot \mathbf{M} \, \nabla$ is an anisotropic generalization of the Laplacian.
A spatial operator of second order is ellipsoidal allows only for ellipsoidal anisotropy. 

A different anisotropic generalization of the Laplacian is obtained by choosing a measure $\mu$ with three 
mutually orthogonal support points $\yy_1, \yy_2, \yy_3$ on the sphere $\mathcal{S}$. In this case 
$\hat{Q}(\kk) = -M\, \vert \kk\vert^\alpha \, \hat{\kk}\cdot \mathbf{A}\hat{\kk}$, where 
$\hat{\kk} := \vert\kk\vert^{-1} \, \kk$.

Let $\I^\gamma$ denote the fractional integral operator
\begin{equation}
\I^\gamma\, f(t) = \int_0^t \frac{(t-s)^{\gamma-1}}{\Gamma(\gamma)} \, f(s) \, \dd s, \quad t > 0
\end{equation}
for $\gamma > 0$. For positive integer $\beta$ the Caputo derivative is an ordinary derivative.
For positive non-integer $\beta$  Caputo derivative $\D^\beta$ of order $\beta$ is defined by the formula 
\begin{equation}
\D^\beta f = \I^{n-\beta}\, \D^n \, f
\end{equation}
where $n$ is an integer such that $n -1 < \beta < n$ \cite{PodlubnyBook}.

\section{Comparison with scalar viscoelasticity}

\subsection{Construction of fractional-order viscoelastic equations of motion.}

As it stands, equation~\eqref{eq:motion} does not have the form of a viscoelastic equation of motion. We shall prove that
it is equivalent to a generic momentum balance equation $\rho \, \D^2 \, u = \Div \D \,\upsigma$, where 
the stress $\upsigma$ is given by the constitutive equation
\begin{equation}
\upsigma = g(t)\ast_t \B \, \D\, \ee, \quad \ee = \left(\nabla u + (\nabla u)^\mathsf{T}\right)/2
\end{equation}
$g$ is a completely monotone function (appendix~\ref{app:some}) 
and $\B$ is a non-local integral operator of order $\alpha-2$ acting on the space variables $x$.

We shall begin with transforming equation~\eqref{eq:motion} into a second order equation with respect 
to time. 

Equation~\eqref{eq:motion} implies that 
$u(t,x) - u(0,x) - t\, \dot{u}(0,x) \equiv \I^2\, \D^2 \, u = \rho^{-1}\,\I^\beta \, Q \, u$, hence
$\rho\, \D^2 \, u = \D^2\, \I^\beta\, Q \, u$. The right-hand side can be expressed in the form 
$\D \, \I^{\beta-1} \, Q \, u = \D \, \I^{\beta-1}\, Q\, \D u + \left[t^{\beta-1}/\Gamma(\beta)\right]\, 
Q \, u(0,x)$. \begin{equation} \label{eq:ewq}
\rho \D^2\, u = \D^2\, \I^\beta \, Q u
\end{equation}
The right-hand side of equation~\eqref{eq:ewq} can be transformed as follows:
\begin{multline}
\D^2\, \int^t_0 \tau^{\beta-1}/\Gamma(\beta)\, (Q u)(t-\tau) \, \dd \tau =\\
 \D t^{\beta-1}/\Gamma(\beta)\, (Q u)(0) + 
\D \, \int^t_0 \tau^{\beta-1}/\Gamma(\beta)\, (Q \D u)(t-\tau) \, \dd \tau = \\
\D \, t_+^{\beta-1}/\Gamma(\beta)\ast_t \, (Q \D u)(t) = \\
\int^t_0 (t-\tau)^{\beta-2}/\Gamma(\beta-1)\, (Q \D u) (\tau) \, \dd \tau 
\end{multline} 
where we have taken advantage of the inequality $\beta > 1$ and the commutation of $Q$ and $\D$ and set $(\D u)(0,x) = 0$. 
We thus have transformed equation~\eqref{eq:motion} into the following form
\begin{equation} \label{eq:ewq2}
\rho \D^2\, u = g\ast_t \, Q u
\end{equation}
where the relaxation modulus $g(t) = t_+^{\beta-2}/\Gamma(\beta-1)$. 
The relaxation modulus $g$ is locally integrable completely monotone (LICM, see Appendix~\ref{app:some}. 

In the next step we shall recast the equation in the form of a momentum conservation equation by 
setting $Q u = \Div \mathbf{\KK}(u)$. To this effect we note that $\hat{Q}$, given by equation~\eqref{eq:N},
is a homogeneous and differentiable function of degree $\alpha$:
$$ \hat{Q}(\lambda \kk) = \lambda^\alpha \, \hat{Q}(\kk)$$
Hence
$\hat{Q}(\kk) = \ii \kk \cdot \hat{\KK}(\kk)$, where $\hat{\KK} = \nabla_\kk \, \hat{Q}(\kk)/(\ii \alpha)$. We now define 
the operator $\KK$ by its symbol $\hat{\KK}$. 
The stress (momentum flux) can now be identified as $\upsigma = g\ast_t \D \, \KK(u)$ and equation~\eqref{eq:ewq2}
assumes the form of a momentum conservation equation:
\begin{equation} \label{eq:momentum}
\rho\, \D^2 \, u = \Div \upsigma
\end{equation}

We now verify whether the stress is a functional of the strain rate $\ee$. To this effect we note that 
$\hat{Q}(\kk) = -\kk\cdot \B(\kk) \, \kk$, where $\B(\kk) := -\nabla_\kk \, 
\nabla_\kk \hat{Q}(\kk)/[\alpha\, (\alpha-1)]$.
Hence $\hat{\KK}(\kk) = \ii\, \B(\kk)\, \kk/(\alpha-1)$, $\KK(u) = \B(-\ii \nabla) u$ and
$\upsigma = g\ast_t\, \D \, \B(-\ii \nabla) u = g \ast_t\, \QQ(\D \ee)$, where $\QQ$ is a non-local operator acting on 
spatial variables. Consequently stress is given as  a non-local linear functional of the strain history.
$\B$ is the extension of the stiffness tensor to the spatially non-local viscoelasticity.

The above procedure can be readily extended to vectorial equations of viscoelasticity. In this case $Q$ is a 
tensor-valued operator
\begin{equation}
(\QQ \uu)_k = Q_{kl} \, u_l
\end{equation}
where $\uu$ denotes the displacement vector, and
$$ \upsigma_{kl} = g\ast_t\, \QQ_{klmn}(\D u_{m,n})$$
We now require that (1) the stress effectively depends only on strain $\ee = 
\left(\nabla \uu + (\nabla \uu)^\mathsf{T}\right)/2$, (2) symmetry of stress $\upsigma = \upsigma^{\mathsf{T}}$
follows from directly from the constitutive equations. The two requirements will be satisfied if 
$\B_{klmn} = \B_{klnm} = \B_{lkmn}$. The above symmetries are satisfied if there is a homogeneous generating function
$V(\kk)$ of degree $\alpha+2$, with continuous derivatives up to 4-th order, such that 
\begin{equation}
Q_{kl}(\kk) = \frac{\partial^2 V(\kk))}{\partial k_k \, \partial k_l}
\end{equation}

For the operator defined by \eqref{eq:N} the generating function assumes the following form 
\begin{equation}
V(\kk) = - \frac{1}{(\alpha + 1)\, (\alpha + 2)} \int_{\mathcal{S}} \vert \kk\cdot\yy\vert^{\alpha+2} 
\, \mu(\dd \yy) 
\end{equation}

For simplicity we have assumed here that the kernel $g$ is scalar. Otherwise we would have to take 
into account commutation of the tensor-valued functions or else to consider convolution with a tensor-valued
kernel dependent on both time and space variables. Viscoelastic equations with tensor-valued relaxation modules 
$g$ and local spatial operators are considered in \cite{HanDuality}. 

\subsection{Energy conservation and its implications.}

Existence of an energy conservation with a non-negative energy functional plays an important role in
proofs of well-posedness of the equations. We shall examine some assumptions on the operator $Q$ 
that ensure existence of a non-negative energy 
satisfying a conservation equation. 

The following energy balance holds for solutions of equation~\eqref{eq:motion} $\beta = 2$ with 
$Q = \Div \circ \B \circ \Grad$:
\begin{equation} \label{eq:energy-space}
\frac{\dd}{\dd t} \left[ \int_{\mathbb{R}^3}  \frac{1}{2} \rho \, \dot{u}^2 \, \dd x + 
\frac{1}{2} \int_{\mathbb{R}^3} (\nabla u)^\mathsf{T}\, \B \, (\nabla u)\, \dd x \right] = 0
\end{equation}
where $\dot{u} := \D u$. The pseudo-differential operator $\B$ can be expressed as a convolution with a spatial 
distribution $H$: $\B f = H\ast_x\, f$. The Fourier transform $\hat{H}$ of the kernel 
$H$ is the symbol of the operator $\B$. The second term in the square brackets on the left-hand side represents the stored energy $U(t)$ at time $t$.

We now impose the condition that the stored energy is non-negative. Applying the Fourier transformation to $U$ 
we have on account of Plancherel's theorem
\begin{equation}
U(0) = \frac{1}{2} \int_{\mathbb{R}^3} (\ii \kk \hat{u}(0,\kk))^\dag\, \hat{H}(\kk) \, (\ii \kk \hat{u}(0,\kk)) \,
\dd \kk \equiv -\frac{1}{2} \int_{\mathbb{R}^3} \hat{u}(0,\kk)^\dag\, \hat{Q}(\kk)\,
\hat{u}(0,\kk)\, \dd \kk 
\end{equation}
In view of the arbitrariness of the function $u(0,x)$ we shall impose the condition that $\hat{Q}(\kk)$ is
Hermitian positive semidefinite
\begin{equation}
\hat{Q}(\kk) \leq 0
\end{equation}
i.e. $\mathbf{w}^\dag\, \hat{Q}(\kk) \, \mathbf{w}$ for all complex vectors $\mathbf{w} \in \mathbb{C}^3$. 

In particular, for scalar equations with $Q = -M\, \left(-\nabla^2\right)^{\alpha/2}$, $1 < \alpha \leq 2$, we have
$\hat{Q}(\kk) = - M \,\left(\kk^2\right)^{\alpha/2} \leq 0$, and 
\begin{equation}
\hat{H}(\kk) = -\frac{1}{\alpha\, (\alpha-1)} \frac{\partial^2\, \left(\kk^2\right)^{\alpha/2}}{\partial \kk^2} = 
-\alpha\,\left(\mathbf{I} -(2-\alpha)\,  \hat{\kk}\, \hat{\kk} \right)\,\left(\kk^2\right)^{\alpha-2}
 \leq 0
\end{equation} 
because $2 - \alpha \leq 1$. 

Let us now turn to $1 < \beta < 2$. The equation $\rho\, \D^\beta \, u = Q u$ can be transformed to the more familiar form 

If the stress is defined by the equation
\begin{equation} \label{eq:stress}
\upsigma = g \ast_t H \ast_x\, \nabla \D u
\end{equation}
then equation~\eqref{eq:ewq2} assumes the familiar form \eqref{eq:momentum}. Taking the scalar product of
both sides of the last equation with the vector $\D u$ we arrive at the formula
\begin{equation}
\frac{\dd}{\dd t} \frac{\rho}{2} (\D u)^2 = -\trace (\upsigma \, \nabla \D u)
\end{equation} 
We shall now construct the stored energy functional $U$ in such a way that $\dd U/\dd t = \trace (\upsigma \, \nabla \D u)$. 

The function $g(t)$ is completely monotone and locally integrable (LICM). LICM functions 
are causally positive definite (CPD) \cite{GripenbergLondenStaffans}. We shall therefore carry out our analysis for a general 
causally positive definite function $G$. A CPD function $g$ can be expressed as the Fourier transform of of a positive
Radon measure $m$
\begin{equation}
g(t) = \int_{-\infty}^\infty \e^{\ii s t} m(\dd s) 
\end{equation}
where
\begin{equation}
m(\dd s)  = \frac{1}{\uppi} \Re \hat{g}(s)
\end{equation}
In particular, for $g(t) = t_+^{\beta-2}/\Gamma(\beta-1)$ we have 
$m(\dd s) = (1/\uppi)\, \sin(\beta \uppi/2) \, \vert s \vert^{1-\beta}\, \dd s$ \cite{Gelfand}. 

Let $\psi := H\ast_x\, \nabla \D u$, 
$$\yy(t,x;s) := \int_0^t \e^{\ii s (t-\tau)} \, \psi(\tau) \, \dd \tau$$
We now make the following assumption 
\begin{assumption} \label{ass:1}
For each $\kk$ the matrix 
$\hat{H}(\kk)$ is positive definite and symmetric.
\end{assumption} 
The function $\hat{H}$ an even function of $\kk$ because $H$ is real-valued.
Define the kernel 
\begin{equation}
H_{-1}(x) = \frac{1}{2 \uppi} \int_{\mathbb{R}^3} \,\e^{-\ii \kk\cdot x} \hat{H}{\kk}^{-1} \, \dd \kk
\end{equation}
It is easy to check that $H_1(x) \ast_x \, H(x) = \delta(x)$. We now define the stored energy functional
\begin{equation} \label{eq:stored}
U = \frac{1}{2} \int_{\mathbb{R}^3}  \int_{\mathbb{R}^3} \int_{-\infty}^\infty \yy(t,x;s)^\dag\, H_1(x-y) \, \yy(t,y;s) 
 \, m(\dd s)\, \dd y \, \dd y 
\end{equation}
Since $\D \yy = \psi - \ii s \yy$ and $H_{-1}$ is even and symmetric
\begin{multline}
\frac{\dd U}{\dd t} = \int_{-\infty}^\infty \frac{1}{2} \int_{\mathbb{R}^3}  \int_{\mathbb{R}^3}  
\left( \psi(t,x)^\dag \, H_{-1}(x-y)\, \yy(t,x;s) + \yy^\dag\, H_{-1}(x-y) \, \psi(t,x)  \right) \,
 \dd y\,  \dd x\,
 m(\dd s) = \\ \int_{-\infty}^\infty  \left\{ \int_{-\infty}^\infty  \left[\int_{-\infty}^\infty \yy(t,x;s) \, m(\dd s)\right]
\, \int_{-\infty}^\infty H_{-1}(x - y) \, \psi(t,y) \, \dd y  \right\} \, \dd x \\ = \int_{\mathbb{R}^3} 
\trace(\upsigma(t,x) \, \nabla \D u(t,x)) \, \dd x
\end{multline}
as required.

We have thus proved energy conservation 
\begin{equation}
\frac{\dd}{\dd t} \left[ \int_{\mathbb{R}^3} \frac{\rho \, (\D u)^2}{2} \, \dd x + U\right] = 0
\end{equation}
with the stored energy $U$ defined by equation~\eqref{eq:stored}. 
Assumption~\ref{ass:1} played a key role in the construction of the stored energy potential. 

Assumption~\ref{ass:1} also ensures that $U \geq 0$. Indeed, by Plancherel's theorem,
$$U = \int_{\mathbb{R}^3} \hat{\yy}(t,\kk;s)^\dag\, \hat{H}(\kk)^{-1}\, \hat{\yy}(t,\kk;s) \, m(\dd s) \geq 0$$

\section{Solution of the Cauchy problem for $1 < \beta \leq \alpha \leq 2$.}

\subsection{Formulation of the Cauchy problem.}

Consider an abstract time-fractional equation 
\begin{equation}\label{eq:abstract}
\D^\beta \, u = A u
\end{equation}
where $A$ is an operator and $\D^\beta$ is the Caputo fractional derivative.
For $\beta \leq 1$  the Caputo derivative $\D^\beta = \I^{1-\beta}\, \D$ and, using the semigroup property of 
the fractional integral operators $\I^\alpha$,  equation~\eqref{eq:abstract} is equivalent to the integral equation
$$ \I \, \D u = \I^\beta \, A \, u$$
or, more explicitly,
$$u(t) = u(0) + \I^\beta \, A \, u$$
We thus expect that equation~\eqref{eq:abstract} with the initial condition $u(0) = u_0$ has a unique solution. 

If $1 < \beta \leq 2$, then $\D^\beta = \I^{2-\beta}\, \D^2$ and equation~\eqref{eq:abstract} 
is equivalent to $\I^2\, \D^2 u = \I^\beta \, A \, u$, or
$$u(t) = u(0) + t\, u^\prime(0) + \I^\beta \, A \, u$$
where $u^\prime = \D u$. 
In this case we expect that equation~\eqref{eq:abstract} with the initial conditions 
\begin{equation} \label{eq:IV}
u(0) = u_0, \quad u^\prime(0) = \dot{u}_0
\end{equation}
has a unique solution. 

Applying the Laplace transformation we have
\begin{equation}
\tilde{u}(p) = \frac{u_0 \, p^{\beta-1}}{p^\beta + \tilde{f}(p)}
\end{equation}
for $0 < \beta \leq 1$ and
\begin{equation}
\tilde{u}(p) = \frac{u_0 \, p^{\beta-1} + \dot{u}_0 \,p^{\beta-2} }{p^\beta + \tilde{f}(p)}
\end{equation}
for $1 < \beta \leq 2$.

\subsection{Solution of the Cauchy problem.}

The Cauchy problem for the scalar equation \eqref{eq:motion}, \eqref{eq:N} in 3 dimensions has an 
explicit solution 
in terms of an integral over a unit sphere with an integrand involving Wright functions 
\cite{MainardiMLWright2011,MainardiMuraPagnini10,GorenfloLuchkoMainardi99}. The Wright functions are numerically computable by an integral representation 
\cite{LuchkoWright,LuchkoTrujilloVelasco10}. An alternative method for numerical integration of would involve pseudo-spectral methods combined with a smart representation of the 
Caputo fractional derivative 
the method of \cite{YuanAgrawal,LuHanygaJCP,Diethelm2010}. On the other hand, for some values of the index the Wright function can be expressed in terms of the exponential and the Airy function.

Applying to equation~\eqref{eq:motion} the Laplace transform $f(t) \rightarrow \mathcal{L}[f](p) \equiv \tilde{f}(p) 
= \int_0^\infty \e^{-p t}\, f(t)\, \dd t$, the Fourier transform 
$g(x) \rightarrow \mathcal{F}[g](\kk) \equiv \hat{g}(\kk) = \int_{\mathbb{R}^3} \e^{-\ii \kk\cdot x}\, g(x) \, \dd x$ 
and the identity 
\cite{PodlubnyBook}
\begin{equation}
\mathcal{L}\left[\D^\beta f\right](p) = p^\beta \, \tilde{f}(p) - p^{\beta-1}\, f(0) - p^{\beta-2}\, f^\prime(0)
\end{equation}
valid for $1 < \beta \leq 2$, we obtain the equation
\begin{equation}
U(p,\kk) := \mathcal{F}\left[\mathcal{L}[u]\right](p,\kk) = \widehat{u_{0}}(\kk)\, \frac{p^{\beta-1}}{p^\beta + \hat{Q}(\kk)} 
+ \widehat{\dot{u}_{0}}(\kk)\, \frac{p^{\beta-2}}{p^\beta + \hat{Q}(\kk)}
\end{equation}

The solution of the Cauchy problem \eqref{eq:motion}, \eqref{eq:IV} will be obtained by inversion of the Laplace and Fourier transformations,
$$u(t,x) = \frac{1}{2 \uppi \ii} \int_\mathcal{B} \dd p \, \e^{p t} \frac{1}{(2 \uppi)^3} \int_{\mathbb{R}^3} \dd \kk \,
\e^{\ii \kk\cdot x} \, U(p,\kk)$$
Hence 
\begin{equation}
u(t,x) = \int_{\mathbb{R}^3}  G_{\beta,Q}(t,x-y)\, u_0(y) \, \dd y + \int_{\mathbb{R}^3}  H_{\beta,Q}(t,x-y)\, \dot{u}_0(y) \, \dd y
\end{equation}
where 
\begin{equation} \label{eq:GN}
G_{\beta,Q}(t,x) = \frac{1}{2 \uppi \ii} \int_\mathcal{B} \dd p \, \e^{p t} \frac{1}{(2 \uppi)^3} \int_{\mathbb{R}^3}  
\e^{\ii \kk\cdot x} \, \frac{p^{\beta-1}}{p^\beta + \hat{Q}(\kk)} \, \dd_3 \kk
\end{equation}
\begin{equation} \label{eq:HN}
H_{\beta,Q}(t,x) = \frac{1}{2 \uppi \ii} \int_\mathcal{B} \dd p \, \e^{p t} \frac{1}{(2 \uppi)^3} \int_{\mathbb{R}^3} 
\e^{\ii \kk\cdot x} \, \frac{p^{\beta-2}}{p^\beta + \hat{Q}(\kk)} \, \dd_3 \kk
\end{equation}
and $\dd \kk = k^2 \, \dd k \times \sin(\vartheta) \, \dd \vartheta \times \dd \varphi$.

The functions $G_{\beta,Q}$ and $H_{\beta,Q}$ will be called the first fundamental solution and the second fundamental 
solution of the Cauchy problem for \eqref{eq:motion}, respectively.

Exchanging the order of the inverse Laplace and Fourier transformations we have
\begin{gather} \label{eq:Ftr}
G_{\beta,Q}(t,x) = \frac{1}{(2 \uppi)^3} \int_{\mathbb{R}^3} 
\e^{\ii \kk\cdot x} \, E_\beta(\hat{Q}(\kk)\, t)\, \dd_3 \kk\\
H_{\beta,Q}(t,x) = \frac{1}{(2 \uppi)^3} \int_{\mathbb{R}^3} 
\e^{\ii \kk\cdot x} \, E_{\beta,2}(\hat{Q}(\kk)\, t)\, \dd_3 \kk\
\end{gather}

If $\hat{Q} = -a \, \vert \kk \vert^\alpha$, $a > 0$, then $U(p,\kk)$ is a function $V(p,k)$ of $k := \vert \kk \vert$ 
and the integration 
over the angular coordinates in \eqref{eq:GN} and \eqref{eq:HN} can be carried out explicitly, yielding 
$G^{(3)}_{\gamma,\alpha} := G_{\beta,Q}$, where we use a new parameter $\gamma := \beta/\alpha$ for reasons that will soon become clear:
\begin{multline} \label{eq:1Dto3D}
G^{(3)}_{\gamma,\alpha}(t,x) = \frac{1}{2 \uppi \ii} \int_\mathcal{B} \dd p \, \e^{p t} \frac{1}{(2 \uppi)^2\, \ii \, r} 
\int_{-\infty}^\infty k \, \e^{\ii k r} V(p,\vert k \vert) \, \dd k = \\
\frac{1}{2 \uppi \ii} \int_\mathcal{B} \dd p \, \e^{p t} \frac{1}{2 \uppi\, r} \frac{\partial}{\partial r}
\frac{1}{2 \uppi} \int_{-\infty}^\infty
k \, \e^{\ii k r} V(p,\vert k \vert) \, \dd k = - \frac{1}{2 \uppi r} \frac{\partial}{\partial r} G^{(1)}_{\gamma,\alpha}(t,r)\vert_{r=\vert x\vert}
\end{multline}
where $G^{(1)}_{\gamma,\alpha}$ denotes the solution of the same problem in one-dimensional space. 
Similarly
\begin{equation}
H^{(3)}_{\gamma,\alpha}(t,x) = - \frac{1}{2 \uppi r} \frac{\partial}{\partial r} H^{(1)}_{\gamma,\alpha}(t,r)\vert_{r=\vert x\vert}
\end{equation}

Let $f$ be an even differentiable function on the real line.
The mapping $f(r) \rightarrow F(r) := -f^\prime(r)/(2 \uppi r)$ has two noteworthy properties:
\begin{enumerate}[(i)]
\item if $f$ vanishes at infinity, then $$\int_{\mathbb{R}^3} f(\vert x\vert) \, \dd x = \int_{-\infty}^\infty f(r) \, \dd r ; $$
\item if $f$ is unimodal with a mode at 0 (i.e. if $f^\prime(r) \leq 0$ for $r > 0$), then $F \geq 0$.
\end{enumerate}
Consequently, if $f$ is a probability density on the real line with a maximum or a singularity at 0, then $F(\vert x\vert)$
is a probability density in $\mathbb{R}^3$. This observation is crucial for distinguishing between diffusion and wave propagation in 1 and 3 dimensions.

The scaling transformations allow considerable simplification of the problem by reducing the number of unknowns. 
Changing the integration
variables $p \rightarrow s = p t$, $k \rightarrow \kappa = k\, t^\gamma$ we get the scaling relation
\begin{gather}
G^{(d)}_{\gamma,\alpha}(t,x) = t^{-d \gamma}\, G^{(d)}_{\gamma,\alpha}\left(1,x/t^{\gamma}\right)\\
H^{(d)}_{\gamma,\alpha}(t,x) = t^{1-d \gamma}\, H^{(d)}_{\gamma,\alpha}\left(1,x/t^{\gamma}\right)
\end{gather}

A simple integral representation of the $G^{(1)}_{\gamma,\alpha}(1,x)$ is constructed in the paper of Mainardi 
et al. \cite{MainardiLuchkoPagnini}:
\begin{equation}
G^{(1)}_{\gamma,\alpha}(1,x) = \int_0^\infty M_\gamma(\xi)\, X_\alpha(x/\xi) \, \frac{\dd \xi}{\xi}
\end{equation}
where 
\begin{equation}
X_\alpha(y) := \begin{cases} \frac{1}{\uppi} \frac{\vert y\vert^{\alpha-1} \, 
\sin\left(\alpha \uppi/2\right)}{1 + 2 \vert y\vert^\alpha \, \cos(\alpha \uppi/2) + \vert y\vert^{2 \alpha}}\quad \text{for $0 < \alpha < 2$}\\
(\delta(y-1) + \delta(y+1))/2 \quad \text{for $\alpha = 2$}
\end{cases}
\end{equation}
We derive it in a slightly different way in Appendix~\ref{app:identity} and Appendix~\ref{app:X}.
It is easy to see that $X_\alpha(y) \geq 0$ for $-\infty < y < \infty$ and 
$$\int_{-\infty}^\infty X_\alpha(y)\, \dd y = 1$$
hence $X_\alpha$ is a probability distribution.
$X_\alpha$ is the function denoted by $N^0_\alpha$ in \cite{MainardiLuchkoPagnini}. The function 
$M_\gamma$ is a special case of th
Wright function (Appendix~\ref{app:Wright}):
\begin{equation}
M_\gamma(z) = W_{\gamma,1-\gamma}(-z), \qquad 0 < \gamma < 1
\end{equation}
Consequently 
\begin{equation}
G_{\gamma,\alpha}^{(3)}(1,x) = U^{(\gamma,\alpha)}(r) := \int_0^\infty M_{\gamma}(\xi)\, X^{(3)}_\alpha(r/\xi) \, \frac{\dd \xi}{\xi}
\end{equation}
where $r = \vert x \vert$ and $\gamma = \beta/\alpha$, while 
\begin{equation}
X^{(3)}_\alpha(y) := -\frac{1}{2 \uppi^2\, y} \frac{\partial}{\partial y} \frac{y^{\alpha-1} \, 
\sin\left(\alpha \uppi/2\right)}{1 + 2 y^\alpha \, \cos(\alpha \uppi/2) + y^{2 \alpha}}, \qquad y > 0,\quad 0 < \alpha < 2
\end{equation}
It is easy to see that $X_\alpha^{(3)}(y) = y^{\alpha-3}\, Z_\alpha(y)$, where $Z_\alpha$ is regular at 0. 

In Fig.~\ref{fig:Gplot1} scaled plots of $Y_\alpha$ are shown. The function $X_1$ is the Cauchy probability density
and $X_1^{(3)}$ is non-negative. For $\alpha \leq 1$ the function $X_\alpha$ is unimodular with a singularity at 
$y = 0$ and therefore $X^{(3)}_\alpha$ is non-negative. For $\alpha > 1$ the maximum of $X_\alpha$ is shifted to the 
right of $y = 0$ and therefore the function $X^{(3)}_\alpha$ changes sign.  

\begin{figure}
\includegraphics[width=\linewidth]{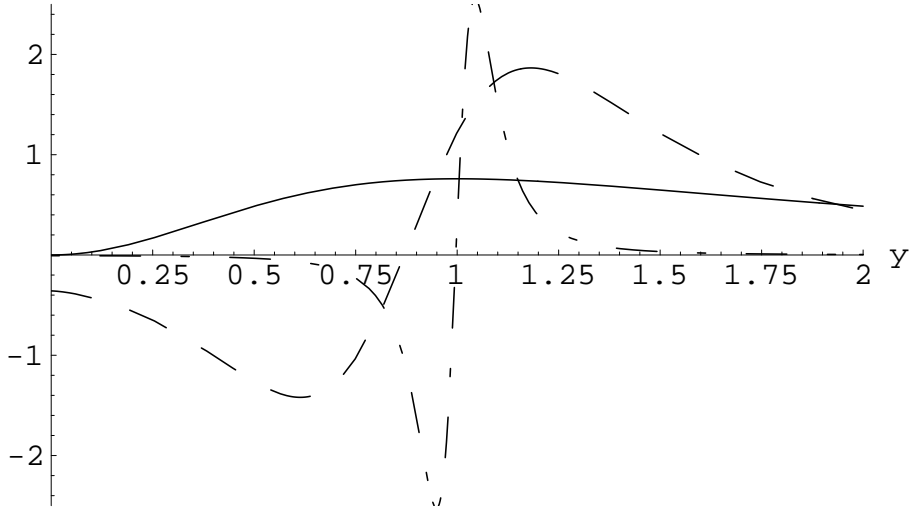}
\caption{Scaled plots of the function $Z_\alpha(y) := y^{3-\alpha}\, X^{(3)}_\alpha(y)$: (1) solid line:  $30\, Z_{1}$,
(2) dashed: $20 \, Z_{1.5}$, (3) dot-dashed: $Z_{1.9}$.} 
\label{fig:Gplot1}
\end{figure}

The parameter $\xi$ scales the spatial coordinates of the component waves $Y_\alpha$. 
The function $M_\gamma(\xi)$ represents the weights of various scalings of the
solution.
In Fig.~\ref{fig:Wright} the function $M_\gamma(z)$ is plotted for $\gamma = 1/2, 1/3, 2,3$. 
The shift of the maximum of the Wright function for $\gamma > 1/2$ is noteworthy.
\begin{figure}
\includegraphics[width=\linewidth]{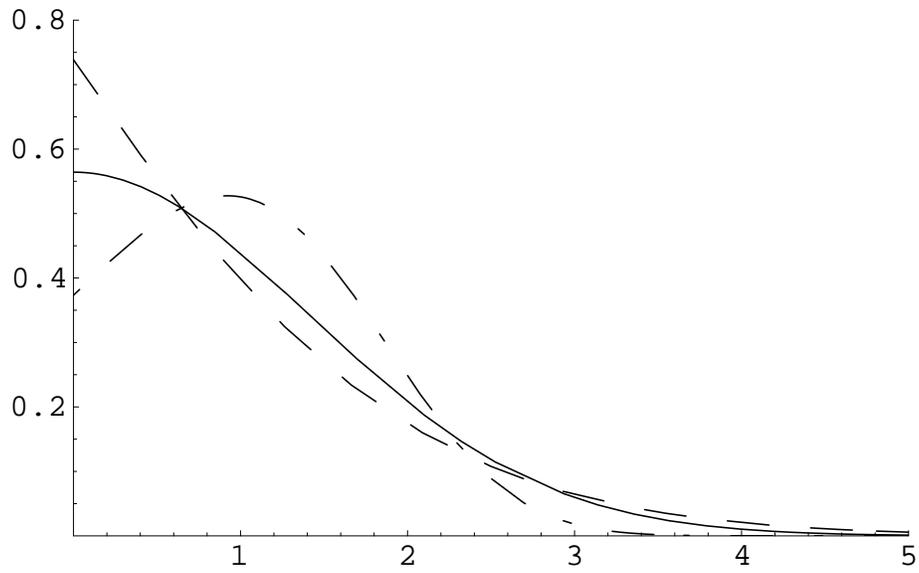}
\caption{The function $M_\gamma(z)$: (1) solid line: $\gamma = 1/2$, (2) dashed line: $\gamma = 1/3$; 
(3) dot-dashed line: $\gamma = 2/3$ .} 
\label{fig:Wright}
\end{figure}

The functions $G^{(d)}_{(2/3,\alpha}$ for $d = 1,3$ and $\alpha = 1.5$ and $1.9$ are shown in Fig.~\ref{fig:u23-1D} and 
Fig.~\ref{fig:u23-3D}. The values of $\beta$ are 1, and 1.2666. 
\begin{figure}
\includegraphics[width=\linewidth]{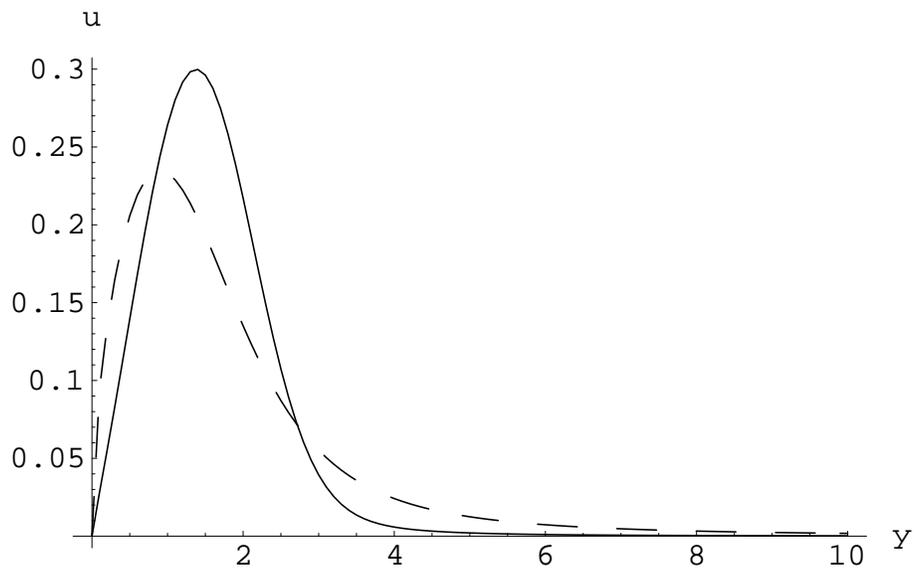}
\caption{The function $G^{(1)}_{(2/3,\alpha)}/2$ for $\alpha = 1.9$ (solid line) and $\alpha = 1.5$ (dashed line).}
\label{fig:u23-1D}
\end{figure}
\begin{figure}
\includegraphics[width=\linewidth]{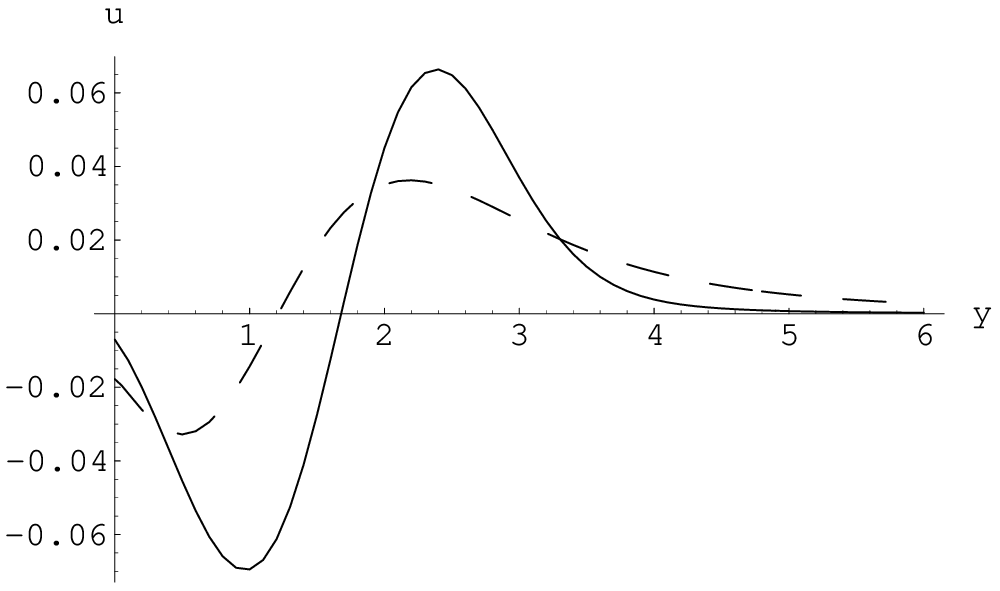}
\caption{The function $G^{(3)}_{(2/3,\alpha)}/2$ for $\alpha = 1.9$ (solid line) and $\alpha = 1.5$ (dashed line).}
\label{fig:u23-3D}
\end{figure}

\section{Fundamental solutions for anisotropic non-locality.}

\subsection{Neutral case $1 < \beta = \alpha \leq 2$.}

We now assume that $\hat{Q}(\kk)$ is given by equation~\eqref{eq:N}. The inverse Laplace transform of 
$\hat{\tilde{G}}(p,\kk)$ and $\hat{\tilde{H}}(p,\kk)$ is 
$E_\beta\left(\hat{Q}(\kk)\, t^\beta\right)$ and $E_{\beta,2}\left(\hat{Q}(\kk)\, t^\beta\right)$, respectively
or, equivalently, $E_\beta\left(-k^\alpha \, F(\hat{\kk})\, t^\beta\right)$ and $E_{\beta,2}\left(-k^\alpha \, F(\hat{\kk})\, t^\beta\right)$, respectively, 
where $k := \vert \kk \vert$, $\hat{\kk} := k^{-1}\, \kk$ and $F(\hat{\kk}) = -\hat{Q}(\hat{\kk})$. The Mittag-Leffler 
function $E_{\lambda,\mu}(z)$ is defined in the appendix and $E_\lambda := E_{\lambda,1}$.

We shall begin with the case $\beta = \alpha > 1$. 

The first fundamental solution can be obtained by inverting the Laplace and Fourier transform:
\begin{multline*}
G(t,x) = \frac{1}{(2 \uppi)^3} \int_0^\infty k^2 \, \dd k \int_{\mathcal{S}} \dd_2 \hat{\kk} \, 
E_\alpha\left(-k^\alpha \, F(\hat{\kk})\, t^\alpha\right) \, \e^{\ii \kk\cdot x} = 
\\
-\nabla^2 \, \frac{1}{(2 \uppi)^3\, t} \, \int_{\mathcal{S}} \dd_2 \hat{\kk}\, F(\hat{\kk})^{-1/\alpha} 
\int_0^\infty E_\alpha\left(-\kappa^\alpha\right)\, \e^{\ii \kappa \,\left[(\hat{\kk}\cdot \hat{x})\, r/ F(\hat{\kk})^{1/\alpha}\,t\right] } 
\, \dd \kappa =: G^{(3)}_{1,\alpha}(t,x)
\end{multline*} 
where $\hat{x} := r^{-1}\, x$.
The inverse Fourier transforms of $E_\alpha\left(-\kappa^\alpha\right)$ and 
$E_{\alpha,2}\left(-\kappa^\alpha\right)$ are $X_\alpha(y)$ and $Y_\alpha(y)$, respectively (Appendix~\ref{app:X}), hence
\begin{gather}
G^{(3)}_{1,\alpha}(t,x) = -\frac{1}{(2 \uppi)^3\, t} \int_{\mathcal{S}} \frac{\dd_2 \hat{\kk}}{F(\hat{\kk})^{1/\alpha}} \,
\nabla^2 \, X_\alpha\left(\hat{\kk}\cdot\hat{x} \frac{r}{t \, F(\hat{\kk})^{1/\alpha}}\right)\\
H^{(3)}_{1,\alpha}(t,x) = -\frac{1}{(2 \uppi)^3\, t} \int_{\mathcal{S}} \frac{\dd_2 \hat{\kk}}{F(\hat{\kk})^{1/\alpha}} \,
\nabla^2 \, Y_\alpha\left(\hat{\kk}\cdot\hat{x} \frac{r}{t \, F(\hat{\kk})^{1/\alpha}}\right)
\end{gather}
The subscript "1" refers to the value of $\gamma = \beta/\alpha$.

Choosing the spherical coordinates $(r,\vartheta,\varphi)$ in the $x$-space in such a way that $\hat{kk}$ 
corresponds to $\vartheta = 0$ and
$\hat{\kk}\cdot \hat{x} = \cos(\vartheta)$, and substituting 
$$\nabla^2 = \frac{1}{r^2} \frac{\partial}{\partial r} \left( r^2 \frac{\partial}{\partial r}\right) + 
\frac{1}{\sin(\vartheta)}  \frac{\partial}{\partial \vartheta} \left( \sin(\vartheta) 
\frac{\partial}{\partial \vartheta}\right)$$ a rather complicated but fairly explicit expression is obtained.

We shall denote the functions $G(t,x)$ and $H(t,x)$ obtained above by
$G^\mu_{1,\alpha}(t,x)$ and $H^\mu_{1,\alpha}(t,x)$.

\subsection{Anisotropic non-locality, $1 < \beta < \alpha \leq 2$.}

For the case $\beta < \alpha$ we shall use the identity
\begin{equation} \label{eq:identity}
E_\beta(-y) = \int_0^\infty M_\gamma(\xi) \, E_\alpha\left(-y\,\xi^\alpha\right)\, \dd \xi
\end{equation}
where $\gamma = \beta/\alpha$,
see Appendix~\ref{app:identity}.
Hence, setting $t = 1$ for simplicity,
$$E_\beta\left(-k^\alpha \, F(\hat{\kk}\, t^\beta) \right) = \int_0^\infty M_{\gamma}(\xi) \,
E_\alpha\left(-\xi^\alpha \, k^\alpha\,  F\left(\hat{\kk}\, \left(t^\gamma\right)^\alpha\right)\right) \, \dd \xi$$

Applying the inverse Fourier transformation we get the first fundamental solution $G^\mu_{\gamma,\alpha}$,
\begin{equation}
G^\mu_{\gamma,\alpha}(t,x) = \int_0^\infty M_\gamma(\xi)\, G^\mu_{1,\alpha}\left(t^\gamma,x/\xi\right)
\, \frac{\dd \xi}{\xi} 
\end{equation}
or, equivalently
\begin{equation}
G^\mu_{\gamma,\alpha}(t,x) = \int_0^\infty M_\gamma(\xi)\, G^\mu_{1,\alpha}\left(1,y/\xi\right)
\, \frac{\dd \xi}{\xi} 
\end{equation}
where $y = x/t^\gamma$. Similarly the second fundamental solution is given by the Mellin convolution 
\begin{equation}
H^\mu_{\gamma,\alpha}(t,x) = \int_0^\infty N_\gamma(\xi)\, G^\mu_{1,\alpha}\left(1,y/\xi\right)
\, \frac{\dd \xi}{\xi} 
\end{equation}

\section{Concluding remarks.}

Wave equations with fractional-order Laplacians and their non-local generalizations allow for a new class
of models compatible with very general anisotropy. They are compatible with the principles of viscoelasticity.

Fairly explicit expressions have been obtained for the fundamental functions of the Cauchy problem.
While solutions by the inverse Laplace and inverse Fourier transformation are intractable,
solutions in the form of a Mellin convolution are easier to obtain and analyze. Analysis of the attenuation and 
dispersion is however more difficult than in local viscoelasticity. For a space-time fractional equation 
the dispersion and attenuation are defined by an equation $\rho \, p^\beta + M \vert \kk \vert^\alpha = 0$. It is
easy to see that $\vert \kk \vert \sim_\infty\, A  \omega^\gamma$ for large $\omega := \ii p$. Thus
the attenuation grows at a sublinear rate in the high-frequency region.

\bibliography{mrabbrev,ownnew12,mathnew12}

\appendix

\section{Some definitions}
\label{app:some}

A function $f$ on the open positive real half-line $]0,\infty[$ is said to be completely monotone (CM)
if it is infinitely differentiable and $(-1)^n \, \D^n \,f(t) \geq 0$ for every non-negative integer $n$ and
$t > 0$.

A CM function can have a singularity at 0. A CM function is locally integrable (LICM) if
it its integral over the segment $[0,1]$ is finite.

A measurable real function $f$ on $[0,\infty[$ is said to be causally positive definite (CPD) if 
$$ \int_{-\infty}^\infty  \int_0^\infty f(y)\, \varphi(x-y)\, \varphi(x) \geq 0$$
for every compactly supported real test function $\varphi$.

By a theorem in \cite{GripenbergLondenStaffans} every LICM function is CPD. 
Gripenberg's extension of Bochner's theorem asserts that every continuous CPD function is equal for $x \geq 0$ 
to the Fourier transform of a finite positive Radon measure. For our purposes a the restriction to 
bounded functions is unwelcome. 
The Bochner-Schwartz theorem is an extension of Bochner's theorem to positive definite tempered distributions: 
\begin{theorem}
A causal positive definite tempered distribution $f$ is the Fourier transform of a positive tempered
Radon measure:
$$
f(x) = \int_{-\infty}^\infty \e^{-\ii \kk\cdot x} \, \mu(\dd \kk), \qquad x > 0
$$
\end{theorem}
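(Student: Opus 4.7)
The plan is to reduce the statement to Laurent Schwartz's structure theorem for non-negative distributions: the causal positive-definiteness hypothesis will be converted, via Fourier transform, into positivity of $\hat f$ against a natural cone of test functions; a density and regularization argument will then upgrade this to positivity on all non-negative Schwartz functions; and finally Schwartz's theorem identifies $\hat f$ with a Radon measure, with temperedness inherited from $f$.

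First I would extend $f$ by zero to the negative real axis so that it becomes a causal tempered distribution on $\mathbb{R}$ with support in $[0,\infty)$. The causal positive-definiteness condition then reads $\langle f,\,\varphi\ast\check\varphi\rangle\geq 0$ for every real Schwartz test function $\varphi$, with $\check\varphi(y)=\varphi(-y)$. Using the convolution identity $\widehat{\varphi\ast\check\varphi}=|\hat\varphi|^{2}$ together with the distributional Plancherel pairing, this translates, up to a harmless $(2\uppi)$-factor, into
\begin{equation*}
\langle\hat f,\,|\hat\varphi|^{2}\rangle\geq 0 \qquad\text{for every Schwartz test function }\varphi.
\end{equation*}

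To upgrade this to $\langle\hat f,\psi\rangle\geq 0$ for arbitrary non-negative $\psi\in C^{\infty}_{c}(\mathbb{R})$, I would regularize as $\psi_{\epsilon}:=\psi+\epsilon\,\chi$, where $\chi$ is a smooth cutoff equal to $1$ near $\operatorname{supp}\psi$. Then $\sqrt{\psi_{\epsilon}}$ lies in $C^{\infty}_{c}(\mathbb{R})$, so writing $\sqrt{\psi_{\epsilon}}=\hat\varphi_{\epsilon}$ for the appropriate Schwartz $\varphi_{\epsilon}$ gives $|\hat\varphi_{\epsilon}|^{2}=\psi_{\epsilon}$; letting $\epsilon\downarrow 0$ yields non-negativity of $\hat f$ on $\psi$. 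Thus $\hat f$ is a non-negative tempered distribution, and Schwartz's structure theorem (every positive distribution is of order zero and hence a Radon measure) identifies it with a locally finite non-negative Borel measure $\mu$. Temperedness of $\hat f$ forces $\mu$ to have at most polynomial growth at infinity, and Fourier inversion then delivers
\begin{equation*}
f(x)=\int_{-\infty}^{\infty}\e^{-\ii\kk\cdot x}\,\mu(\dd\kk)
\end{equation*}
in the sense of tempered distributions, and in particular pointwise for $x>0$ in the appropriate quotient sense.

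The principal obstacle is the square-root/density step. The cone $\{|\hat\varphi|^{2}:\varphi\text{ Schwartz}\}$ must be shown to be large enough to detect positivity of $\hat f$ against all non-negative test functions, and the delicate point is precisely at the zeros of $\psi$, where the pointwise square root is not smooth. The $\epsilon$-regularization resolves this, but the passage to the limit $\epsilon\downarrow 0$ must be shown to be compatible with the Schwartz-space topology, since $\hat f$ is continuous only in that topology; this in turn uses the finite-order property of tempered distributions to bound the pairing uniformly in terms of finitely many Schwartz seminorms of $\hat\varphi_{\epsilon}$. A subsidiary nuisance is the bookkeeping of $2\uppi$ factors between the paper's conventions for $\mathcal{F}$ and its inverse, which affects the normalization of $\mu$ but not the positivity conclusion.
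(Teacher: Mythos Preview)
The paper does not actually prove this statement: it is quoted as the classical Bochner--Schwartz theorem and left without proof. So there is no ``paper's own proof'' to compare against; your proposal is an attempt to supply an argument the authors simply cite.

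That said, your outline contains a genuine gap at the extension step. You extend $f$ by zero to obtain a causal distribution $\tilde f$ supported in $[0,\infty)$ and then aim to show that $\hat{\tilde f}$ is a non-negative measure. This cannot succeed as written. The paper's CPD hypothesis uses \emph{real} test functions $\varphi$, and for real $\varphi$ the function $\varphi\ast\check\varphi$ is even; equivalently $|\hat\varphi|^{2}$ is even. Hence the inequality $\langle\hat{\tilde f},|\hat\varphi|^{2}\rangle\geq 0$ tests $\hat{\tilde f}$ only against even non-negative functions, and your square-root step cannot reach an arbitrary $\psi\in C^{\infty}_{c}$: for $\sqrt{\psi_\epsilon}$ to equal $\hat\varphi_\epsilon$ with $\varphi_\epsilon$ real one needs $\sqrt{\psi_\epsilon}$ (hence $\psi$) to be even. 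Worse, the conclusion you aim for is actually false for $\tilde f$: a real tempered distribution whose Fourier transform is a non-negative measure must be even, whereas the zero extension of a nontrivial $f$ on $[0,\infty)$ is never even.

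The repair is to work with the \emph{even} extension $f_e(x):=f(|x|)$ rather than the zero extension. Since $\varphi\ast\check\varphi$ is even, the CPD condition is exactly $\langle f_e,\varphi\ast\check\varphi\rangle\geq 0$ for all real $\varphi$; because $f_e$ is even this is equivalent to full positive-definiteness of $f_e$, and now the standard Bochner--Schwartz argument (your regularized square-root step, restricted to even $\psi$, which suffices for the even distribution $\hat{f_e}$) yields a positive tempered measure $\mu$ with $f_e(x)=\int \e^{-\ii k x}\,\mu(\dd k)$. Restricting to $x>0$ gives the stated representation of $f$. This also explains why the theorem is asserted only for $x>0$.
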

where $\mu$ is a tempered Radon measure (i.e. a non-negative tempered distribution). This theorem captures functions with singularities typical of Gelfand's homogeneous distributions.

\section{Wright functions}
\label{app:Wright}

The Wright function depend on two parameters. They are defined in terms of power series expansion 
$$W_{\lambda,\mu}(z) := \sum_{n=0}^\infty \frac{z^n}{n! \, \Gamma(n \lambda + \mu)}, 
\qquad \lambda > -1, \; \mu \in \mathbb{C}$$
They also have an integral representation 
$$W_{\lambda,\mu}(z) = \frac{1}{2 \uppi \ii} \int_{\mathcal{H}} \e^{p - z\,p^{-\lambda}} 
\frac{\dd p}{p^\mu}, \qquad \lambda > -1, \; \mu \in \mathbb{C}$$
where $\mathcal{H}$ denotes the Hankel contour encircling the cut along the negative real axis, as 
shown in Fig.~\ref{fig:Hankel}.

The Wright functions $M_\gamma$ and $N_\gamma$ are special cases of the Wright function
\begin{gather}
M_\gamma(z) = W_{-\gamma,1-\gamma}(-z) \\
N_\gamma(z) = W_{-\gamma,2-\gamma}(-z)
\end{gather}
The best review paper on the Wright function and $M_\gamma$ is \cite{MainardiMLWright2011}. The function 
$N_\gamma$ has not been studied before. Their series expansions are
\begin{gather}
M_\gamma(z) = \sum_{n=0}^\infty \frac{(-z)^n}{n! \, \Gamma(1 - \gamma\,(n + 1)} 
= \frac{1}{\uppi} \sum_{n=1}^\infty \frac{(-z)^{n-1}}{(n-1)!} \Gamma(n\,\gamma) \, 
\sin(n\,\uppi \gamma)\\
N_\gamma(z) = \sum_{n=0}^\infty \frac{(-z)^n}{n! \, \Gamma(2 - \gamma\,(n + 1)} 
\end{gather}
These functions also have the integral representations
\begin{gather} \label{eq:M} 
M_\gamma(z) = \frac{1}{2 \uppi \ii} \int_\mathcal{H} \xi^{\gamma-1}\, \e^{\xi - z \, \xi^\gamma} \, \dd \xi \\
N_\gamma(z) = \frac{1}{2 \uppi \ii} \int_\mathcal{H} \xi^{\gamma-2}\, \e^{\xi - z \, \xi^\gamma} \, \dd \xi 
\end{gather}
\begin{figure}
\includegraphics[width=0.8\linewidth]{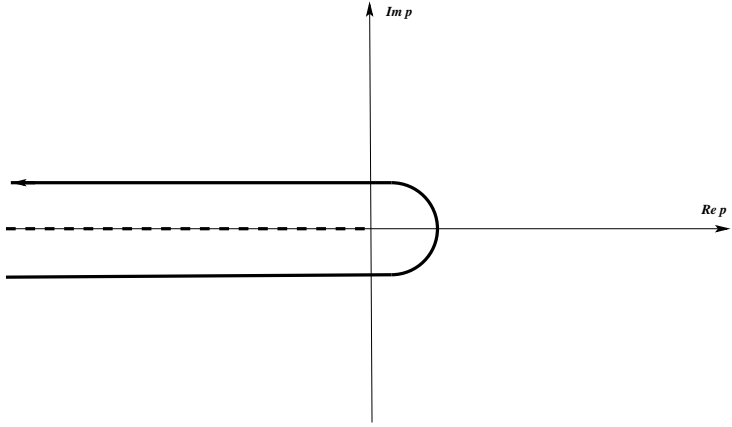}
\caption{The Hankel contour $\mathcal{H}$.} \label{fig:Hankel}
\end{figure}
Numerical methods of evaluating the Wright functions can be found in \cite{LuchkoTrujilloVelasco10} and 
\cite{LuchkoWright}.

The Mellin transform of $M_\gamma(z)$ is $\Gamma(1-s)/\Gamma(1-\gamma s)$ and
\begin{equation}
M_\gamma(z) = \frac{1}{2 \uppi \ii} 
\int_{-\varepsilon-\ii\infty}^{-\varepsilon+\ii\infty} \frac{\Gamma(1-s)}{\Gamma(1-\gamma s)} 
\, z^{-s} \, \dd s \quad \text{for $z > 0$, $0 < \gamma < 1$, $0 < \varepsilon < 1$}
\end{equation}
As a rough check note that for $z > 1$ the Bromwich contour can be closed in the right $s$ half-plane. The 
contribution of the poles of $\Gamma(1-s)$ at $s = n+1$, $n = 0, 1, 2,\ldots$ is
$$\sum_{n=0}^\infty (-1)^n \frac{z^n}{n!\, \Gamma(1 - (n+1)\,\gamma)} = M_\gamma(z)$$

Similarly, the Mellin transform of $N_\gamma(z)$ is $\Gamma(1-s)/\Gamma(2-\gamma s)$.

\section{Proof of identity \eqref{eq:identity} and related results.}
\label{app:identity}

\begin{lemma} \label{lem:Ebeta}
\begin{gather} \label{eq:Ebeta}
E_\beta(-z) = \frac{1}{2 \uppi \ii} \int_{\mathcal{B}} \frac{\Gamma(s)\, \Gamma(1-s)}{\Gamma(1 - \beta s)} z^{-s}\, 
\dd s\\
E_{\beta,2}(-z) = \frac{1}{2 \uppi \ii} \int_{\mathcal{B}} \frac{\Gamma(s)\, \Gamma(1-s)}{\Gamma(2 - \beta s)} z^{-s}\, 
\dd s  \label{eq:E1beta}
\end{gather}
\end{lemma}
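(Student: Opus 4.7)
The plan is to identify both Mellin--Barnes integrals with the Taylor series of $E_\beta$ and $E_{\beta,2}$ by residue calculus. I would take $\mathcal{B}$ to be a vertical line $\{\mathrm{Re}(s) = c\}$ with $0 < c < 1$, which separates the poles of $\Gamma(s)$ at $s = 0, -1, -2,\ldots$ from those of $\Gamma(1-s)$ at $s = 1, 2,\ldots$. Close the contour on the left by a large semicircle (suitably indented around the integers), apply the residue theorem, and sum the contributions at $s = -n$. Using $\mathrm{Res}_{s=-n}\Gamma(s) = (-1)^n/n!$, $\Gamma(1-s)\big|_{s=-n} = n!$, and $\Gamma(1-\beta s)\big|_{s=-n} = \Gamma(1+\beta n)$, the $n$-th residue in \eqref{eq:Ebeta} equals
\begin{equation*}
\frac{(-1)^n}{n!}\,\frac{n!}{\Gamma(1+\beta n)}\,z^n = \frac{(-z)^n}{\Gamma(\beta n + 1)},
\end{equation*}
whose sum over $n \ge 0$ is precisely the defining series of $E_\beta(-z)$. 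The argument for \eqref{eq:E1beta} is structurally identical: only the factor $\Gamma(1-\beta s)$ is replaced by $\Gamma(2-\beta s)$, which substitutes $\Gamma(\beta n + 2)$ in the denominator and reproduces the series for $E_{\beta,2}(-z)$.

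The technical core is to show that the semicircular arc contributes nothing in the limit of infinite radius. A convenient first step is Euler's reflection formula $\Gamma(s)\Gamma(1-s) = \pi/\sin(\pi s)$, which recasts the integrand as $\pi z^{-s}/[\sin(\pi s)\,\Gamma(1-\beta s)]$. Stirling's asymptotic for $1/\Gamma(1-\beta s)$ produces factorial decay as $\mathrm{Re}(s) \to -\infty$, while $|\sin(\pi s)|^{-1}$ is at worst exponentially bounded in $|\mathrm{Im}(s)|$ away from the integers. The standard off-integer Stirling estimate $|\Gamma(a+ib)|^{-1} \leq C(1+|b|)^{1/2-a}\,e^{\pi|b|/2}$, applied with $a = 1-\beta\,\mathrm{Re}(s)$, $b = -\beta\,\mathrm{Im}(s)$, combined with the reflection formula, shows that for $\beta \leq 2$ the integrand decays on the indented semicircle fast enough for the arc integral to vanish. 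Consequently the residue sum equals the original integral for $z$ in the disk of convergence of the Mittag--Leffler series; analytic continuation in $z$ then extends the identity to the full stated domain, since both sides are holomorphic in $z$ on a slit neighborhood of the positive real axis.

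The principal obstacle I anticipate is the uniformity of the Stirling bound at the boundary case $\beta = 2$, where the decay from $1/\Gamma(1-\beta s)$ just barely dominates the growth of $1/\sin(\pi s)$; in that case the bound on the integrand degenerates to $\lesssim|\mathrm{Im}(s)|^{\beta c - 1/2}$ on the semicircle, which is sufficient only after careful choice of $c$ and of the indentation radius. A clean remedy is to tilt $\mathcal{B}$ very slightly into a Hankel-type contour, or to prove the identity first for $\beta \in (0,2)$ and pass to the limit $\beta \uparrow 2$ using continuity of both sides in $\beta$ for fixed $z > 0$. Aside from this boundary subtlety the argument is routine Mellin--Barnes bookkeeping and the same estimates carry over verbatim to \eqref{eq:E1beta}.
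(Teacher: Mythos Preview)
Your proposal is correct and follows essentially the same route as the paper: close the Bromwich contour to the left, use the reflection formula $\Gamma(s)\Gamma(1-s)=\uppi/\sin(\uppi s)$ together with Stirling's asymptotics to kill the arc, and sum the residues at $s=0,-1,-2,\ldots$ to recover the Mittag--Leffler series, with the second identity handled analogously. Your treatment of the arc estimate and the $\beta=2$ boundary case is in fact more careful than the paper's, which only sketches the $\re s\to-\infty$ decay.
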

\begin{proof}
Since $\Gamma(s)\, \Gamma(1-s) = \uppi/\sin(\uppi s)$, the integrand on the right-hand side of 
equation~\eqref{eq:Ebeta} has poles at $s = n = 0, \pm1, \pm 2,\ldots$. Stirling's formula
$$\Gamma(z) = \sqrt{2 \uppi} \, z^{-1/2}\, (z/\e)^z \, (1 + \mathrm{O}[1/z]) 
\qquad\text{for $\vert \arg(z) \vert < \uppi - \varepsilon$, $\varepsilon > 0$} $$ implies that for 
$\re s \rightarrow 
-\infty$
\begin{multline*}
z^{-s}/\Gamma(1 - \beta \, s) \sim_{s\rightarrow\infty} (-\beta s)^{1/2}/\sqrt{2 \uppi} \,
\exp\left( (\beta s - 1) \, [\ln(1 - \beta s) - 1] - s \, \ln(z) \right) 
\end{multline*}
The first term in the exponent dominates and therefore the integrand vanishes for $\re s \rightarrow -\infty$ faster than 
$1/\vert s \vert$. 
Hence the Bromwich contour can be closed by a half-circle at infinity in the left half of the complex $s$-plane without changing the value of the integral. The only singularities inside the closed contour thus obtained are the poles at
$s = 0, -1, -2,  $. Their residues are 
$(-z)^n/\Gamma(1+\beta n)$. Hence the right-hand side of equation~\eqref{eq:Ebeta} is equal to
$$\sum_{n=0}^\infty \frac{(-z)^n}{\Gamma(1 + n\, \beta)} \equiv E_\beta(-z)$$

The proof of equation~\eqref{eq:E1beta} is analogous.
\end{proof}
The integrands on the right hand side of \eqref{eq:Ebeta} and \eqref{eq:E1beta} are thus Mellin transforms of the left-hand sides:
\begin{gather}
\int_0^\infty E_\beta(-x) \, x^{s-1} \, \dd s = \frac{\Gamma(s)\, \Gamma(1-s)}{\Gamma(1 - \beta s)}\\
\int_0^\infty E_{\beta,2}(-x) \, x^{s-1} \, \dd s = \frac{\Gamma(s)\, \Gamma(1-s)}{\Gamma(2 - \beta s)}
\end{gather}

\begin{lemma}
\begin{gather} \label{eq:Mgamma}
M_\gamma(y) = \frac{1}{2 \uppi \ii} \int_{\mathcal{B}} \frac{\Gamma(s)}{\Gamma(1 + \gamma (s - 1))}  y^{-s} \, \dd s \qquad 0 < \gamma < 1\\
N_\gamma(y) = \frac{1}{2 \uppi \ii} \int_{\mathcal{B}} \frac{\Gamma(s)}{\Gamma(2 + \gamma (s - 1))}   y^{-s} \, \dd s\qquad 0 < \gamma < 1
\end{gather}
\end{lemma}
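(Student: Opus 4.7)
The plan is to mimic the proof of Lemma~\ref{lem:Ebeta}. I verify that the Mellin--Barnes integrand $\Gamma(s)\,y^{-s}/\Gamma(1+\gamma(s-1))$ decays fast enough in the left half-plane to allow closing the Bromwich contour $\mathcal{B}$, and then sum the residues at the poles of $\Gamma(s)$.

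For the decay, Stirling's formula along vertical lines $s=\sigma+\ii t$, $|t|\to\infty$, gives
$$|\Gamma(\sigma+\ii t)|\sim \sqrt{2\uppi}\,|t|^{\sigma-1/2}\,\e^{-\uppi|t|/2},\qquad |\Gamma(1-\gamma+\gamma\sigma+\ii\gamma t)|\sim \sqrt{2\uppi}\,(\gamma|t|)^{1/2-\gamma+\gamma\sigma}\,\e^{-\uppi\gamma|t|/2},$$
so the ratio carries an exponential factor $\e^{-\uppi(1-\gamma)|t|/2}$ which, since $\gamma<1$, ensures absolute convergence of the Bromwich integral. To close the contour on the left I use the reflection identity
$$\frac{\Gamma(s)}{\Gamma(1+\gamma(s-1))}=\frac{\Gamma(s)\,\Gamma(\gamma-\gamma s)\,\sin(\uppi(1-\gamma+\gamma s))}{\uppi}$$
and choose radii $R=n+\frac{1}{2}$ for large integer $n$ in order to stay away from the poles of $\Gamma(s)$. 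Applying Stirling to $\Gamma(s)\,\Gamma(\gamma-\gamma s)$ at the real endpoint $s=-R$ produces decay of order $(\e^{1-\gamma}y/R^{1-\gamma})^R$, which dominates the $\sin$ factor and the contribution of $y^{-s}$ for every fixed $y>0$; analogous bounds hold uniformly on the rest of the arc, so the semicircular piece vanishes as $n\to\infty$.

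The only singularities inside the closed contour are the simple poles of $\Gamma(s)$ at $s=-n$, $n=0,1,2,\dots$, with residues $(-1)^n/n!$. Residue calculus then yields
$$\frac{1}{2\uppi\ii}\int_\mathcal{B}\frac{\Gamma(s)}{\Gamma(1+\gamma(s-1))}\,y^{-s}\,\dd s=\sum_{n=0}^\infty\frac{(-1)^n}{n!}\,\frac{y^n}{\Gamma(1-\gamma(n+1))}=\sum_{n=0}^\infty\frac{(-y)^n}{n!\,\Gamma(1-\gamma(n+1))},$$
which is the defining series for $M_\gamma(y)$ recorded in Appendix~\ref{app:Wright}. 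The identity for $N_\gamma$ follows from the same argument with $\Gamma(1+\gamma(s-1))$ replaced by $\Gamma(2+\gamma(s-1))$; the resulting residues $(-y)^n/[n!\,\Gamma(2-\gamma(n+1))]$ reproduce the series for $N_\gamma(y)$.

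The main obstacle is the Stirling estimate along the arc closing the contour on the left: a direct expansion fails near the poles of $\Gamma(s)$, and one has to pass through the reflection formula above to put both gamma factors into a half-plane where Stirling is uniformly valid. With that in hand the residue calculation is routine and the whole argument parallels the treatment of $E_\beta$ in Lemma~\ref{lem:Ebeta}.
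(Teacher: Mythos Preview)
Your proof is correct and follows essentially the same approach as the paper: close the Bromwich contour to the left and sum the residues of $\Gamma(s)$ at $s=-n$, recovering the defining series for $M_\gamma$ and $N_\gamma$. You supply considerably more detail on the Stirling estimates and the reflection-formula trick needed to justify closing the contour than the paper does; the paper's own proof simply says the argument parallels Lemma~\ref{lem:Ebeta} and writes down the residue sum.
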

\begin{proof}
The proof is similar to the proof of Lemma~\ref{lem:Ebeta}. The residues of $\Gamma(s)$ at $s = -n$, $n = 0, 1,\ldots$.
are $(-1)^n/n!$, hence the integral on the right-hand side of \eqref{eq:Mgamma} is equal to 
$$\sum_{n=0}^\infty \frac{z^n}{n! \, \Gamma(1 - \gamma (n + 1))}, $$ which is equal to the function on the 
left-hand side.

The other equation is proved in the same way.
\end{proof}

We are now ready to prove the following important relations
\begin{gather}  \label{eq:important1}
E_\beta\left(-\kappa^\alpha\right) = \int_0^\infty M_\gamma(\xi) \, E_\alpha\left(-(\kappa \xi)^\alpha\right)\, \dd \xi\\
E_{\beta,2}\left(-\kappa^\alpha\right) = \int_0^\infty N_\gamma(\xi) \, 
E_\alpha\left(-(\kappa \xi)^\alpha\right) \, \dd \xi \label{eq:important2}
\end{gather}
where $0 < \gamma = \beta/\alpha < 1$.

Equation~\eqref{eq:Ebeta} implies that
\begin{equation}
E_\beta(-\kappa^\alpha) = \frac{1}{2 \uppi \ii \alpha} \int_{\mathcal{B}} 
\frac{\Gamma(s/\alpha)\, \Gamma(1 - s/\alpha)}{\Gamma(1 - \gamma s) } \kappa^{-s}\, \dd s
\end{equation}
while equation \eqref{eq:Mgamma} implies that
\begin{equation}
\frac{1}{\xi} M_\gamma(1/\xi) = \frac{1}{2 \uppi \ii} \int_{\mathcal{B}} \frac{\Gamma(1 - s)}{\Gamma(1 - \gamma s)} \xi^{-s} \, \dd s 
\end{equation}
We now recall that the Mellin transform of the Mellin convolution
$$\int_0^\infty f(\xi)\, g(x/\xi) \frac{\dd \xi}{\xi} $$
of two functions $f$ and $g$ is equal to the product of the Mellin transforms of $f$ and $g$. Hence
$$
E_\beta\left(-\kappa^\alpha\right) = \int_0^\infty \frac{1}{\xi} M_\gamma(1/\xi)\, E_\alpha\left(-(\kappa/\xi)^\alpha\right) \frac{\dd \xi}{\xi}  = 
\int_0^\infty M_\gamma(\zeta) \, E_\alpha\left( -(\kappa \zeta)^\alpha\right) \, \dd \zeta
$$
This proves equation~\eqref{eq:important1}. Equation~\eqref{eq:important2} is proved in a similar way.

The identities \eqref{eq:important1} and \eqref{eq:important2} have an important corollary. Applying the inverse Fourier transform 
to both sides of either identity and noting \eqref{eq:Ftr} we obtain the relations
\begin{gather}  \label{eq:Gconv}
G_{\beta,\alpha}(1,x) = \int_0^\infty M_\gamma(\xi) \, G_{\alpha,\alpha}(1,x/\xi)\, \frac{\dd \xi}{\xi}\\
H_{\beta,\alpha}(1,x) = \int_0^\infty N_\gamma(\xi) \, G_{\alpha,\alpha}(1,x/\xi)\, \frac{\dd \xi}{\xi}
\label{eq:Hconv}
\end{gather}

\section{The functions $M_{2/3}(z)$ and $N_{2/3}(z)$.}
\label{app:twothirds}

A method for numerical computation of the Wright function can be found in \cite{LuchkoTrujilloVelasco10,LuchkoWright}. 

Explicit representations 
of $M_\gamma$ exist for $\gamma = 1/3, 1/2$ and $2/3$. The first two can be found in 
\cite{MainardiMLWright2011}. Since we are interested in wave equations, only $\gamma > 1/2$ is of interest to us.

The function $M_{2/3}(z)$ can be expressed in terms of the Airy function using the results obtained in 
\cite{HanQAM} for the function
\begin{equation}
f_1^{(3)}(t,\lambda_1,\lambda_2) = -\frac{\partial}{\partial \lambda_2} f_2^{(3)}(t,\lambda_1,\lambda_2)
\end{equation}
where
\begin{equation} \label{eq:defF}
f_2^{(3)}(t,\lambda_1,\lambda_2) := \frac{1}{2 \uppi \ii} \int_{\mathcal{B}} s^{-2/3}\, 
\e^{s t} \, \e^{-\lambda_1\, s^{2/3} - \lambda_2 s^{1/3}} \, \dd s 
\end{equation}
and $\mathcal{B}$ denotes the Bromwich contour.
The function $f^{(3)}_2$  can be expressed in terms of the Airy function as follows:
\begin{equation} \label{eq:f23}
f_2^{(3)}(t,\lambda_1,\lambda_2) = \frac{3^{2/3}}{t^{1/3}} \Ai\left((3 t)^{-1/3}\,
\left(\lambda_2 + \frac{\lambda_1^{;2}}{3 t}\right)\right) \, \exp\left( -\frac{\lambda_1}{3 t} \left(\lambda_2 + 
\frac{2 \lambda_1^{;2}}{9 t}\right)\right)
\end{equation}

On the other hand straightforward transformations of the integration variable in \eqref{eq:defF}
yield the relation
\begin{equation} \label{eq:M23a}
M_{2/3}(z) = f_1^{(3)}\left(z^{-3/2},1,0\right)/z
\end{equation}
It is to be noted that the Hankel contour in \eqref{eq:M} can be replaced by the Bromwich contour.

Equation~\eqref{eq:M23a} can be worked out in terms of the Airy function
\begin{equation}
M_{2/3}(z) = \left[3^{-1/3} \, z \, \Ai\left(z^2/3^{4/3}\right) - 
3^{1/3} \Ai^\prime\left(z^2/3^{4/3}\right)\right]\, \e^{-2 z^3/27}
\end{equation}
or in terms of the modified Bessel function of the second kind
\begin{equation} \label{eq:M23K}
M_{2/3}(z) = \frac{z^2}{3^{3/2}\, \uppi} \left[K_{1/3}\left(\frac{2}{27} z^3\right) + 
K_{2/3}\left(\frac{2}{27} z^3\right) \right]\,
\e^{-2 z^3/27}
\end{equation}

A simple expression for the function $U_1^{(2/3,\alpha)}(y)$ can be derived from equation~\eqref{eq:M23K}:
\begin{equation}
U_1^{(2/3,\alpha)}(y) = \frac{\sqrt{3}}{2 \uppi} \int_0^\infty Z(\zeta) \, 
X_\alpha\left(2^{1/3}\, y/\left(3\, \zeta^{1/3}\right)\right) \,\dd \zeta
\end{equation} 
where 
\begin{equation}
Z(\zeta) := \left( K_{1/3}(\zeta) + K_{2/3}(\zeta) \right)\, \e^{-\zeta}
\end{equation}

In order to express $N_{2/3}$ in terms of the Airy functions, we define a new function
$$
F(t,\lambda_1,\lambda_2) := \int_0^{\lambda_1} f_2^{(3)}(t,\xi,\lambda_2)\, \dd \xi 
\equiv \frac{1}{2 \uppi \ii} \int_{\mathcal{B}} s^{-4/3}\, 
\e^{s t} \, \e^{-\lambda_1\, s^{2/3} - \lambda_2 s^{1/3}} \, \dd s
$$
In terms of this function
$$N_{3/2}(z) = z^{1/2} \int_0^1 F\left(z^{-3/2},\xi,0\right) \, \dd \xi$$
Substituting \eqref{eq:f23} the following alternative formula is derived
\begin{equation}
N_{2/3}(z) = 3^{2/3} \, z^{-1} \int_0^z \zeta^{1/2}\, \Ai\left(3^{-4/3}\, \zeta^2\right) \, \exp\left(-\frac{2}{27} \,\zeta^3\right) \, \dd \zeta
\end{equation}
$N_{2/3}$ can be further simplified by substituting 
$$\Ai(x) = \frac{1}{\uppi} 3^{-1/2}\, x^{1/2}\, K_{1/3}\left(\frac{2}{3} x^{3/2}\right)$$
which yields the formula
\begin{equation}
N_{2/3}(z) = \frac{3}{2^{5/6}\, \uppi z} \int_0^{2 z^3/27}  y^{-1/6}\, K_{1/3}(y) \, \e^{-y} \, \dd y
\end{equation}
or
\begin{multline}
N_{2/3}(z) = \frac{z^{1/2}}{\uppi \sqrt{3}} [ \Gamma(1/3) \; _2F_2\left(1/6,1/2;1/3,1/2;-4 z^3/27\right) \\
- \frac{1}{7} 
\Gamma(2/3) \, z^2 \, _2F_2\left(5/6,7/6;5/3,13/3;-4 z^3/27\right)]
\end{multline}

The functions $M_{2/3}$ and $N_{2/3}$ are shown in Fig.~\ref{fig:MN23}.
\begin{figure}
\includegraphics[width=0.8\linewidth]{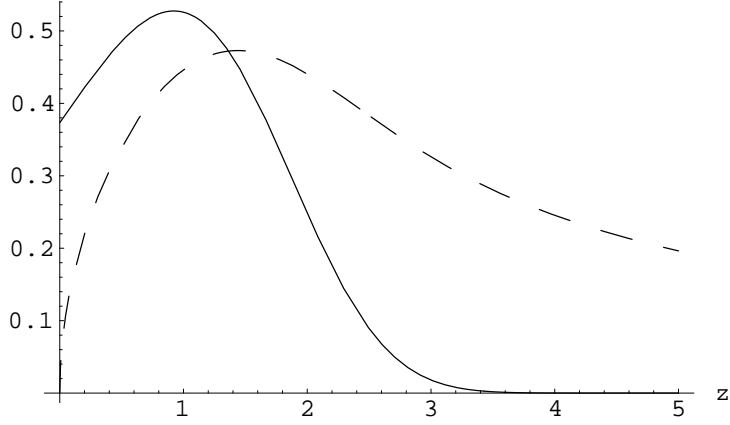}
\caption{Plots of $M_{2/3}(z)$ (solid line) and $N_{2/3}(z)$ (dashed line).}
\label{fig:MN23}
\end{figure}
 
\section{The inverse Fourier transform of $E_\alpha\left(-\vert k \vert^\alpha\right)$.}
\label{app:X}

Applying the inverse Fourier transformation to both sides of the identity 
$$ E_\alpha\left(-\vert k \vert^\alpha\right) = 
\sum_{n=0}^\infty (-1)^n \, \frac{\vert k \vert^{\alpha n}}{\Gamma(1 + \alpha n)}$$
and recalling the Fourier transform of $\vert k\vert^\alpha$ is 
$-2 \sin(n \alpha \uppi/2) \, \Gamma(1 + \alpha n)\, \vert x \vert^{-\alpha - 1}$ \cite{Gelfand}
we have
\begin{multline*}
\frac{1}{2 \uppi} \int_{-\infty}^\infty \e^{\ii k x}\, E_\alpha\left(-\vert k \vert^\alpha\right) \, \dd k = 
-2 \im \sum_{n=0}^\infty (-1)^n \, \vert x \vert^{-\alpha n - 1} \, \left(\e^{\ii \alpha \uppi/2}\right)^n = \\
-\frac{2}{\vert x \vert} \im \frac{1}{1 + \vert x \vert^{-\alpha} \, \e^{\ii \uppi \alpha/2}}
\end{multline*}
so that
\begin{multline} \label{eq:X}
G_{\alpha,\alpha}(1,x) \equiv X_\alpha(x) = \frac{1}{2 \uppi \ii} \int_{-\infty}^\infty \e^{\ii k x} \, E_{\alpha,\alpha}\left(-\vert k\vert^\alpha\right) \, \dd k = \\
\frac{\sin(\uppi \alpha/2)}{\uppi} \frac{\vert x \vert^{\alpha-1}}{\vert x \vert^{2 \alpha} + 2 \vert x \vert^\alpha \, \cos(\uppi \alpha/2) + 1}
\end{multline}
\cite{MainardiLuchkoPagnini}.

The second Green's function $H_{\alpha,\alpha}$ is somewhat more difficult to calculate
$$
H_{\alpha,\alpha}(1,x) \equiv Y_\alpha(x) = \frac{1}{2 \uppi} \int_{-\infty}^\infty \e^{\ii k x}\, E_{\alpha,2}\left(-\vert k \vert^\alpha\right) \, \dd k =: F(\vert x \vert)
$$
But $$E_{\alpha,2}\left(-\kappa^\alpha\right) = \sum_{n=0}^\infty (-1)^n \, \frac{\kappa^{\alpha n}}{\Gamma(2 + \alpha n)}$$
Hence 
\begin{multline*}
F^\prime(y) = -\frac{2}{y^2} \im \sum_{n=0}^\infty (-1)^n \, 
\left(\e^{\ii \alpha n \uppi/2}\, y\right)^{-\alpha n} = \\ 2 \sin(\alpha \uppi/2)\, \frac{y^{\alpha-2}}{y^{2 \alpha} + 2
y^\alpha \, \cos(\alpha \uppi/2) + 1}
\end{multline*}
and $F(y)$ vanishes at infinity. Hence
\begin{equation} \label{eq:Y}
F(y) = -2 \sin(\alpha \uppi/2) \int_y^\infty \frac{z^{\alpha-2}}{z^{2 \alpha} + 2
z^\alpha \, \cos(\alpha \uppi/2) + 1} \dd z
\end{equation} 

Unlike $G_{\alpha,\alpha}$, the function $H_{\alpha,\alpha}$ is not an algebraic function. For $\alpha = 1/2$
it can be expressed in terms of the logarithm.

Combining equations \eqref{eq:X}, \eqref{eq:Y}, \eqref{eq:Gconv}, \eqref{eq:Hconv} and \eqref{eq:1Dto3D} we 
get fairly explicit integral representations of the fundamental solutions of the Cauchy problem in one and 
three dimensions. 

\end{document}